\theoremstyle{plain}
\numberwithin{equation}{section}
\newtheorem{thm}{Theorem}[section]
\newenvironment{exam}[1]%
{\begin{flushleft}\textbf{Example #1}.\enspace}%
{\end{flushleft}}
\newcommand{\complex}{{\mathbb C}}
\newcommand{\ascript}{{\mathcal A}}
\newcommand{\dscript}{{\mathcal D}}
\newcommand{\escript}{{\mathcal E}}
\newcommand{\pscript}{{\mathcal P}}
\newcommand{\rscript}{{\mathcal R}}
\newcommand{\tscript}{{\mathcal T}}
\newcommand{\itre}{\mathop{\mathit{Re}}}
\newcommand{\xhat}{\widehat{x}}
\newcommand{\pscripthat}{\widehat{\pscript}}
\newcommand{\Omegahat}{\widehat{\Omega}}
\newcommand{\escripthat}{\widehat{\escript}}
\newcommand{\abar}{\overline{a}}
\newcommand{\cbar}{\overline{c}}
\newcommand{\atilde}{\widetilde{a}}
\newcommand{\ab}[1]{\left|#1\right|}
\newcommand{\brac}[1]{\left\{#1\right\}}
\newcommand{\paren}[1]{\left(#1\right)}
\newcommand{\sqbrac}[1]{\left[#1\right]}
\newcommand{\elbows}[1]{{\left\langle#1\right\rangle}}
\newcommand{\floors}[1]{{\left\lfloor#1\right\rfloor}}
\newcommand{\ket}[1]{{\left|#1\right>}}
\newcommand{\bra}[1]{{\left<#1\right|}}
\begin{document}

\title{A COVARIANT CAUSAL SET\\APPROACH TO\\DISCRETE QUANTUM GRAVITY
}
\author{S. Gudder\\ Department of Mathematics\\
University of Denver\\ Denver, Colorado 80208, U.S.A.\\
sgudder@du.edu
}
\date{}
\maketitle

\begin{abstract}
A covariant causal set (c-causet) is a causal set that is invariant under labeling. Such causets are well-behaved and have a rigid geometry that is determined by a sequence of positive integers called the shell sequence. We first consider the microscopic picture. In this picture, the vertices of a
c-causet have integer labels that are unique up to a label isomorphism. This labeling enables us to define a natural metric $d(a,b)$ between time-like separated vertices $a$ and $b$. The time metric $d(a,b)$ results in a natural definition of a geodesic from $a$ to $b$. It turns out that there can be $n\ge 1$ such geodesics. Letting $a$ be the origin (the big bang), we define the curvature $K(b)$ of $b$ to be $n-1$. Assuming that particles tend to move along geodesics, $K(b)$ gives the tendency that vertex $b$ is occupied. In this way, the mass distribution is determined by the geometry of the c-causet. We next consider the macroscopic picture which describes the growth process of c-causets. We propose that this process is governed by a quantum dynamics given by complex amplitudes. At present, these amplitudes are unknown. But if they can be found, they will determine the (approximate) geometry of the c-causet describing our particular universe. As an illustration, we present a simple example of an amplitude process that may have physical relevance. We also give a discrete analogue of Einstein's field equations.
\end{abstract}

\section{Introduction}  
The causal set (causet) approach to discrete quantum gravity has been studied by various investigators \cite{gtw09,gud13,hen09,sor03,sur11,wal13}. Unlike previous sequential growth models, in our approach the basic elements are a special type of causet called a covariant causet (c-causet). A c-causet is defined to be a causet that is invariant under labeling. That is, two different labelings of a
 c-causet are label isomorphic. This is a strong restriction which says that the elements of the causet have a unique ``birth order'' up to isomorphism. The restriction to c-causets provides great mathematical simplifications. For example, every c-causet possess a unique history and has precisely two covariant offspring. It follows that there are $2^{n-1}$ c-causets of cardinality $n$. There are also physical reasons to consider c-causets. An arbitrary causet can have a very wild geometry that has no relationship to our universe which is isotropic and homogeneous in the large. On the other hand, a c-causet has a rather mild and well-behaved geometry. This rigid geometry is determined by a sequence of positive integers called the shell sequence.
 
 After our discussion of c-causets, in the following section, we consider the microscopic picture. This picture considers the detailed structure of a
 c-causet in which the vertices have integer labels that are unique up to a label isomorphism. This labeling enables us to define a natural metric distance $d(a,b)$ between time-like separated vertices $a$ and $b$. The time metric $d(a,b)$ results in a natural definition of a geodesic from $a$ to $b$. It turns out that there can be $n\ge 1$ such geodesics. Letting $a$ be the origin (the big bang), we define the curvature $K(b)$ at $b$ to be $n-1$. Assuming that particles tend to move along geodesics, $K(b)$ gives the tendency that vertex $b$ is occupied. In this way, the mass distribution is determined by the geometry of the c-causet. This is counter to the usual view in general relativity where the curvature is determined by the mass distribution. We thus reach the conclusion that the properties of a universe are dictated by the geometry, that is, the shell sequence of the underlying c-causet. Various small examples of shell sequence toy universes are considered. For instance, an exponential growth (inflation) appears to result in a fairly uniform small curvature indicating a homogeneous mass distribution, while a sudden contraction results in large curvatures reminiscent of a black hole. Other examples indicated clusters of masses like galaxies. Of course, these examples are only toy universes and computer simulations would be required for more definitive models.
 
 This now leads to the big question. How do we determine the geometry of our particular universe? We do not expect to find an exact answer to this question. We only hope to find probabilities for various competing geometries and this is where quantum-like theory comes into play. To study this big question, we consider the macroscopic picture which describes the growth process of c-causets. We propose that this process is governed by a quantum dynamics given by complex amplitudes. At present, these amplitudes are unknown. But if they can be found, they will determine the probabilities for geometries of c-causets. As an illustration we shall present a simple example of an amplitude process that may have physical relevance.
 
In the macroscopic picture, we find it appropriate not to consider c-causets individually, but in pairs called twins which are two different c-causets with the some history. The set of twin c-causets has the structure of a discrete 4-manifold which brings us closer to traditional general relativity theory. Finally, we present a discrete analogue of Einstein's field equations. To summarize, this work suggests, contrary to conventional wisdom, that the microscopic picture is described by geometry (discrete general relativity) while the macroscopic picture is described by a discrete quantum amplitude process.

\section{Covariant Causets} 
In this article we call a finite partially ordered set a \textit{causet}. If $a,b$ are elements of a causet $x$, we interpret the order $a<b$ as meaning that $b$ is in the causal future of $a$ and $a$ is in the causal past of $b$. An element $a\in x$ is \textit{maximal} if there is no $b\in x$ with $a<b$. If
$a,b\in x$, we say that $a$ and $b$ are \textit{comparable} if $a<b$ or $b<a$. If $a<b$ and there is no $c\in x$ with $a<c<b$, then $a$ is a
\textit{parent} of $b$ and $b$ is a \textit{child} of $a$. A \textit{chain} is a set of mutually comparable elements of $x$. The \textit{height} $h(a)$ of
$a\in x$ is the cardinality of the longest chain in $x$ whose largest element is $a$. We denote the cardinality of $x$ by $\ab{x}$.

If $x$ and $y$ are causets with $\ab{y}=\ab{x}+1$, then $x$ \textit{produces} $y$ if $y$ is obtained from $x$ by adjoining a single maximal element $a$ to $x$. In this case we write $y=x\uparrow a$ and use the notation $x\to y$. If $x\to y$, we also say that $x$ is a \textit{producer} of $y$ and
$y$ is an \textit{offspring} of $x$. In general, $x$ may produce many offspring and $y$ may be the offspring of many producers. (It is suggested that throughout this article, the reader should draw diagrams to illustrate and understand the various concepts.)

A \textit{labeling} for a causet $x$ is a bijection $\ell\colon x\to\brac{1,2,\ldots ,\ab{x}}$ such that $a,b\in x$ with $a<b$ implies that $\ell (a)<\ell (b)$. A \textit{labeled cause} is a pair $(x,\ell )$ where $\ell$ is a labeling of $x$. For simplicity, we frequently write $x=(x,\ell )$ and call $x$ an
$\ell$-\textit{causet}. Two $\ell$-causets $x$ and $y$ are \textit{isomorphic} if there exists a bijection $\phi\colon x\to y$ such that $a<b$ if and only if $\phi (a)<\phi (b)$ and $\ell\sqbrac{\phi (a)}=\ell (a)$ for all $a\in x$. Isomorphic $\ell$-causets are considered identical as $\ell$-causets. It is not hard to show that any causet can be labeled to form an $\ell$-causet. In general, a causet can be labeled in many nonisomorphic ways but there are exceptions and these are the ones of importance in this work. A causet is \textit{covariant} if it has a unique labeling (up to $\ell$-causet isomorphism). This is a strong restriction which says that the elements of the causet have a unique ``birth order'' up to isomorphisms. We call a covariant causet a c-\textit{causet}. We denote the set of c-causets with cardinality $n$ by $\pscript _n$ and the set of all c-causets by $\pscript$. It is easy to show that any $x\in\pscript$ with $\ab{x}>1$ has a unique producer. Moreover, it is shown in \cite{gudp13} that any c-causet has precisely two covariant offspring. It follows that $\ab{\pscript _n}=2^{n-1}$, $n=1,2,\ldots\,$. The following result is proved in \cite{gudp13}.

\begin{thm}       
\label{thm21}
A causet $x$ is covariant if and only if $a,b\in x$ are comparable whenever $h(a)\ne h(b)$.
\end{thm}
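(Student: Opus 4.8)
The plan is to first restate the condition in a more usable form and then treat the two implications separately, with the bulk of the work in the ``only if'' direction. I would begin by recording the elementary monotonicity fact that $a<b$ forces $h(a)<h(b)$, since any chain ending at $a$ extends to one ending at $b$. Consequently two elements of equal height are always incomparable, and the hypothesis ``comparable whenever $h(a)\ne h(b)$'' is equivalent to the rigid \emph{layered} condition $a<b \iff h(a)<h(b)$. In this layered picture the shells $\brac{c: h(c)=k}$ are antichains and the order is completely determined by heights.

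For the ``if'' direction I would argue that layeredness forces every labeling $\ell$ to assign the block of smallest labels to the height-$1$ shell, the next block to the height-$2$ shell, and so on, because $\ell$ must respect $a<b\iff h(a)<h(b)$. Two labelings $\ell_1,\ell_2$ can therefore differ only by a permutation within each shell. Defining $\phi$ by $\ell_2(\phi(c))=\ell_1(c)$ maps each shell to itself bijectively; since $\phi$ preserves heights it preserves the layered order and is thus an automorphism, exhibiting $\ell_1$ and $\ell_2$ as isomorphic $\ell$-causets. Hence the labeling is unique up to isomorphism and $x$ is covariant. This part I expect to be routine.

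The ``only if'' direction is the heart of the proof, and I would argue by contraposition: assuming some incomparable pair has distinct heights, I must produce two non-isomorphic labelings. The key invariant is the \emph{height profile} of a labeling, namely the sequence $m\mapsto h(\ell^{-1}(m))$. Since any automorphism preserves height, isomorphic labelings must have identical height profiles, so it suffices to build two labelings with different profiles. I would first reduce to a convenient violating pair: among incomparable pairs $a,b$ with $h(a)<h(b)$, choose one minimizing $h(b)-h(a)$, and show this difference must equal $1$ --- otherwise the height-$(h(b)-1)$ predecessor of $b$ along a maximal chain would, by minimality, have to be comparable to $a$, forcing $a<b$ and contradicting incomparability.

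With $a,b$ incomparable and $h(b)=h(a)+1$ in hand, I would exhibit the two labelings explicitly. Ordering the elements by increasing height (with arbitrary tie-breaking within a shell) is always a valid labeling, since $a<b$ implies $h(a)<h(b)$, and it yields the sorted, non-decreasing profile $1^{n_1}2^{n_2}\cdots$. On the other hand, because $a\not<b$ there is a linear extension placing $b$ before $a$; its profile has the value $h(a)+1$ occurring before a later value $h(a)$, hence is not non-decreasing. The two profiles therefore differ, so the labelings are non-isomorphic and $x$ is not covariant. The main obstacle is this reverse direction, and specifically isolating the height profile as the correct isomorphism invariant; once that observation is in place, the minimal-difference reduction and the explicit pair of linear extensions finish the argument cleanly.
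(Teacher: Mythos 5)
There is a wrinkle in the comparison you should know about: the paper itself contains no proof of Theorem~\ref{thm21}; it only states that the result is proved in \cite{gudp13}, a reference listed as ``in preparation.'' So your argument cannot be measured against an in-paper proof and must stand on its own --- and it does. Your reformulation of the hypothesis as the layered condition $a<b\iff h(a)<h(b)$ is correct (using the monotonicity $a<b\Rightarrow h(a)<h(b)$), and the ``if'' direction is handled properly: in a layered causet every labeling must fill the shells in consecutive blocks of labels, so any two labelings differ by a permutation within shells, and the induced bijection $\phi=\ell_2^{-1}\circ\ell_1$ preserves heights, hence preserves the layered order, and satisfies $\ell_2[\phi(c)]=\ell_1(c)$. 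The ``only if'' direction via the height profile $m\mapsto h(\ell^{-1}(m))$ is also sound: an $\ell$-causet isomorphism is in particular an order isomorphism, order isomorphisms preserve height, so isomorphic labelings have identical profiles; sorting by height gives a labeling with non-decreasing profile, while a linear extension placing $b$ before $a$ (which exists precisely because $a\not<b$) gives a profile with a descent. Two remarks. First, your minimal-difference reduction to $h(b)=h(a)+1$ is correct but unnecessary: the profile argument works verbatim for any incomparable pair with $h(a)<h(b)$, since putting $b$ before $a$ already places the larger value $h(b)$ before the smaller value $h(a)$; that whole paragraph can be deleted. Second, the one external ingredient you invoke --- that a finite poset with $a\not<b$ admits a linear extension in which $b$ precedes $a$ --- deserves at least its two-line justification (adjoin the relation $b<a$, check the transitive closure is still a partial order because $a\not<b$, and extend), since for this theorem it is doing real work.
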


For $x\in\pscript$, let
\begin{equation*}
S_j(x)=\brac{a\in x\colon h(a)=j}, j=0,1,2,\ldots
\end{equation*}
We call the sets $S_j(x)\subseteq x$ \textit{shells} and the sequence of integers $s_j(x)=\ab{S_j(x)}$, $j=0,1,\ldots$ is the \textit{shell sequence} for $x$. Of course, elements in the same shell are incomparable and we say they are \textit{space-like separated}. It follows from Theorem~\ref{thm21} that two elements in different shells are comparable and we say they are \textit{time-like} separated. We conclude that a c-causet is uniquely determined by its shell sequence. We think of $\brac{s_j(x)}$ as describing the ``shape'' or geometry of $x$. Mathematically a shell sequence $\brac{s_j(x)}$ is a sequence of positive integers satisfying $\sum s_j(x)=\ab{x}$, where the order of the terms in the sequence is taken into account. To illustrate this, we list the shell sequences for the c-causets with cardinality $n=1,2,3,4,5$.
\begin{align*}
n=1\colon&(1)\\
n=2\colon&(1,1),(2)\\
n=3\colon&(1,1,1),(1,2),(2,1),(3)\\
n=4\colon&(1,1,1,1),(1,1,2),(1,2,1),(2,1,1),(2,2),(1,3),(3,1),(4)\\
n=5\colon&(1,1,1,1,1),(1,1,1,2),(1,1,2,1),(1,2,1,1),(2,1,1,1),(1,1,3),\\
  &(1,3,1),(3,1,1),(1,4),(4,1),(2,3),(3,2),(1,2,2),(2,1,2),\!(2,2,1),\!(5)
  \end{align*}
Each of these shell sequences represents a unique c-causet and as we have previously mentioned the number of sequences with sum $n$ is $2^{n-1}$.

\section{Microscopic Picture} 
In the microscopic picture we examine the detailed structure of a c-causet $x$. We view $x$ as a framework or scaffolding for a possible universe. The vertices of $x$ represent small cells that can be empty or occupied by a particle. The shell sequence that determines $x$ gives the geometry of the framework. In order to describe this universe, we would like to find out how particles move and which vertices they are likely to occupy. We accomplish this by introducing a distance or metric on $x$.

Let $x=\brac{a_1,a_2,\ldots ,a_n}\in\pscript _n$ where the subscript $i$ of $a_i$ is the label of the vertex. For $a_{i_1}, a_{i_m}\in x$ with
$a_{i_1}<a_{i_m}$, a \textit{path} from $a_{i_1}$ to $a_{i_m}$ is a sequence $a_{i_1}<a_{i_2}<\cdots <a_{i_m}$ where $a_{i_j}$ is a parent of
$a_{i_{j+1}}$, $j=1,\ldots ,m-1$. We can think of a path from $a_{i_1}$ to $a_{i_m}$ as a sequence in $x$ starting with $a_{i_1}$ and moving along successive shells until $a_{i_m}$ is reached. If
\begin{equation*}
\gamma =a_{i_1}a_{i_2}\dots a_{i_m}
\end{equation*}
is a path, we define the \textit{length} of $\gamma$ by
\begin{equation*}
\ell (\gamma )=\sqbrac{\sum _{j=2}^m(i_j-i_{j-1})^2}^{1/2}
\end{equation*}
Of course, there are a variety of definitions that one can give for the length of a path, but this is one of the simplest nontrivial choices. For $a,b\in x$ with $a<b$, a \textit{geodesic} from $a$ to $b$ is a path from $a$ to $b$ that has the shortest length. Clearly, if $a<b$, then there is at least one geodesic from $a$ to $b$. If $a,b\in x$ are time-like separated (comparable) and $a<b$ say, then the \textit{distance} $d(a,b)$ is the length of a geodesic from $a$ to $b$. We have that $d(a,b)=d(b,a)$ by definition and we do not define $d(a,b)$ if $a$ and $b$ are incomparable. (We could define $d(a,a)=0$ but this is not needed.) The next result shows that the triangle inequality holds when applicable so $d(a,b)$ has the most important property of a metric. Since $d(a,b)$ is based on the time between ``births'' of $a$ and $b$, we also call $d(a,b)$ the \textit{time metric}.

\begin{thm}       
\label{thm31}
If $a<c<b$, then $d(a,b)\le d(a,c)+d(c,b)$.
\end{thm}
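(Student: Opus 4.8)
The plan is to bound $d(a,b)$ by the length of one specific path from $a$ to $b$, namely the concatenation of a geodesic from $a$ to $c$ with a geodesic from $c$ to $b$. First I would pick a geodesic $\gamma_1$ from $a$ to $c$ and a geodesic $\gamma_2$ from $c$ to $b$; these exist because $a<c$ and $c<b$. Since $\gamma_1$ ends at $c$ and $\gamma_2$ begins at $c$, I can splice them into a single sequence $\gamma$ running from $a$ through $c$ to $b$. The first thing to verify is that $\gamma$ is genuinely a path in the sense defined above, i.e.\ that each consecutive pair is a parent--child pair. This is where the c-causet structure enters: by Theorem~\ref{thm21}, $a<c<b$ places $c$ in a shell strictly between the shells of $a$ and $b$, so the last step of $\gamma_1$ (into $c$) and the first step of $\gamma_2$ (out of $c$) join without gap or overlap, and every step of $\gamma$ remains a parent-to-child step. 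By the definition of $d$ as the minimum length over all paths, it then follows that $d(a,b)\le\ell(\gamma)$.

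Next I would express $\ell(\gamma)$ through $d(a,c)$ and $d(c,b)$. The crucial structural feature of the length functional is that it is the $\ell^2$-norm of the sequence of successive label increments along the path. Concatenating $\gamma_1$ and $\gamma_2$ simply concatenates their two lists of increments, the shared vertex $c$ contributing no increment of its own, so the sums of squares add. Since $\ell(\gamma_1)=d(a,c)$ and $\ell(\gamma_2)=d(c,b)$, this yields
$$\ell(\gamma)^2=d(a,c)^2+d(c,b)^2.$$

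I would finish with the elementary bound $\sqrt{X^2+Y^2}\le X+Y$ for $X,Y\ge 0$, which follows at once from $X^2+Y^2\le(X+Y)^2$. Chaining the three steps gives
$$d(a,b)\le\ell(\gamma)=\sqrt{d(a,c)^2+d(c,b)^2}\le d(a,c)+d(c,b).$$
The only step with any real content is the verification that the spliced sequence is a valid path; the additivity of squared lengths is immediate from the $\ell^2$ form of $\ell(\gamma)$, and the final inequality is standard. I would also remark that the argument in fact proves the stronger Pythagorean-type bound $d(a,b)\le\sqrt{d(a,c)^2+d(c,b)^2}$, of which the stated triangle inequality is merely a consequence.
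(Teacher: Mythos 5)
Your proposal is correct and follows essentially the same route as the paper: bound $d(a,b)$ by the length of the concatenation of a geodesic from $a$ to $c$ with one from $c$ to $b$, note that the squared increments add, and finish with $\sqrt{X^2+Y^2}\le X+Y$ (which the paper cites as Minkowski's inequality). Your explicit verification that the spliced sequence is a path, and your remark on the stronger Pythagorean bound, merely make explicit what the paper leaves implicit; the appeal to Theorem~\ref{thm21} there is harmless but unnecessary, since concatenating two parent-to-child chains that share the endpoint $c$ is automatically a path in any finite poset.
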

\begin{proof}
Let $a_{i_1}a_{i_2}\cdots a_{i_m}$ be a geodesic from $a$ to $b$ $(a_{i_1}=a, a_{i_m}=b)$. Also, let $a_{j_1}a_{j_2}\cdots a_{j_r}$ be a geodesic from $a$ to $c$ and $a_{k_1}a_{k_2}\cdots a_{k_s}$ be a geodesic from $c$ to $b$. As before the subscripts are the labels of the corresponding vertices. We have by Minkowski's inequality that
\begin{align*}
d(a,b)&=\sqbrac{\sum (i_n-i_{n-1})^2}^{1/2}\le\sqbrac{\sum (j_n-j_{n-1})^2+\sum (k_n-k_{n-1})^2}^{1/2}\\
  &\le\sqbrac{\sum (j_n-j_{n-1})^2}^{1/2}+\sqbrac{\sum (k_n-k_{n-1})^2}^{1/2}=d(a,c)+d(c,b)\ \qedhere
\end{align*}
\end{proof}

A \textit{subpath} of a path $a_i,a_{i_2}\cdots a_{i_m}$ is a subset of $\brac{a_{i_1}, a_{i_2},\ldots ,a_{i_m}}$ that is again a path. The next result shows that once we have a geodesic we can take subpaths to form other geodesics.

\begin{thm}       
\label{thm32}
A subpath of a geodesic is a geodesic.
\end{thm}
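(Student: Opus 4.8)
The plan is to first determine what a subpath can possibly look like and then run an exchange (cut-and-paste) argument. Write the geodesic as $\gamma=a_{i_1}a_{i_2}\cdots a_{i_m}$, so that $a_{i_1}<a_{i_2}<\cdots<a_{i_m}$ with each vertex a parent of the next. My first step is to show that every subpath of $\gamma$ is a \emph{contiguous} block $a_{i_p}a_{i_{p+1}}\cdots a_{i_q}$. Indeed, suppose a subset of the vertices of $\gamma$ is again a path and that $a_{i_r}$ and $a_{i_s}$ are consecutive in it with $s>r+1$. Since $\gamma$ is a path, transitivity gives $a_{i_r}<a_{i_{r+1}}<a_{i_s}$, so $a_{i_r}$ is not a parent of $a_{i_s}$ and the subset cannot be a path after all. (Intuitively, a path marches through consecutive shells one vertex at a time, so nothing interior may be dropped.) Hence the only subpaths are contiguous, and it suffices to prove that each contiguous subpath is a geodesic.

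Next I would argue by contradiction. Fix a contiguous subpath $\sigma=a_{i_p}\cdots a_{i_q}$ and suppose it is not a geodesic, so there is a path $\sigma'$ from $a_{i_p}$ to $a_{i_q}$ with $\ell(\sigma')<\ell(\sigma)$. Splicing $\sigma'$ in place of the middle block yields a new sequence $\gamma'$ that agrees with $\gamma$ on the prefix $a_{i_1}\cdots a_{i_p}$ and on the suffix $a_{i_q}\cdots a_{i_m}$ but follows $\sigma'$ in between. I would check that $\gamma'$ is again an honest path from $a_{i_1}$ to $a_{i_m}$: its interior steps come from $\sigma'$, and the two junctions are valid because $a_{i_{p-1}}$ is a parent of $a_{i_p}$ and $a_{i_q}$ is a parent of $a_{i_{q+1}}$ already in $\gamma$.

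The crux is the length bookkeeping, and this is the step I expect to be the real obstacle, since $\ell$ is a square root and is therefore \emph{not} additive under concatenation. The remedy is to compare \emph{squared} lengths, which do add: partitioning the defining sum at positions $p$ and $q$ gives $\ell(\gamma)^2=P+\ell(\sigma)^2+Q$ and, with the identical prefix and suffix contributions $P$ and $Q$, also $\ell(\gamma')^2=P+\ell(\sigma')^2+Q$. Since $\ell(\sigma')<\ell(\sigma)$ forces $\ell(\sigma')^2<\ell(\sigma)^2$, we obtain $\ell(\gamma')^2<\ell(\gamma)^2$, hence $\ell(\gamma')<\ell(\gamma)$ by monotonicity of the square root. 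This contradicts $\gamma$ being a geodesic, so $\sigma$ must be a geodesic. Once the comparison is phrased through squared lengths the argument is a clean exchange, and the combinatorial content is entirely absorbed into the earlier observation that subpaths are contiguous.
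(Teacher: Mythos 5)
Your proof is correct and takes essentially the same route as the paper's: an exchange (cut-and-paste) argument in which a hypothetical shorter replacement for the subpath is spliced into the original geodesic to contradict its minimality. The only difference is thoroughness—the paper asserts the spliced path is shorter without comment, while you explicitly verify the two points it glosses over, namely that subpaths are contiguous blocks and that the length comparison must be run through squared lengths (which, unlike $\ell$ itself, are additive under concatenation).
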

\begin{proof}
Let $a_{i_1}a_{i_2}\cdots a_{i_n}$ be a geodesic. If $a_{j_1}a_{j_2}\cdots a_{j_m}$ is a subpath that is not a geodesic then there is a smaller length path $a_{j_1}a_{k_2}\cdots a_{k_r}a_{j_m}$ from $a_{j_1}$ to $a_{j_m}$. But then
\begin{equation*}
a_{i_1}\ldots a_{j_1}a_{k_1}\cdots a_{k_r}a_{j_m}\cdots a_{i_n}
\end{equation*}
is a path from $a_{i_1}$ to $a_{i_n}$ with smaller length then $a_{i_1}a_{i_2}\cdots a_{i_n}$ which is a contradiction
\end{proof}

We have seen that the shell sequence determines the metric $d$. It is interesting that the converse holds in that the metric determines the shell sequence and hence the geometry of $x\in\pscript$. In fact, all we need to know is when $d(a,b)$ is defined. Let $x=\brac{a_1,a_2,\ldots, a_n}$ where the subscripts are vertex labels. Let $j_0$ be the smallest integer for which $d(a_1,a_{j_0})$ is undefined. If $j_0=n$, then $s_0(x)=n$ and we are finished. Otherwise, $s_0(x)=j_0<n$. Now $a_{j_0+1}\in S_1(x)$ and let $j_1>j_0+1$ be the smallest integer for which
$d(a_{j_0+1},a_{j_1})$ is undefined. If $j_1=n$, then $s_1(x)=n-j_0$ and we are finished. Otherwise, $s_1(x)=j_1-j_0$. Continuing by induction, we obtain the entire shell sequence $(s_0(x),\ldots, s_m(x))$.

Suppose we have a c-causet $x$ with shell sequence $(s_0(x),s_1(x),\ldots ,s_m(x))$. We view this as a sequential growth process in which vertices are born one at a time. We start with the vertex labeled 1 and new vertices are sequentially born until the first shell is filled: $1,2,\ldots ,s_0(x)$. Then the second shell is filled: $s_0(x)+1,\ldots ,s_0(x)+s_1(x)$. The process continues until the last shell is filled. We can thus represent $x$ by the sequence of integers:
\begin{align*}
x=&(1,2,\ldots ,s_0(x); s_0(x)+1,\ldots ,s_0(x)+s_1(x);\cdots ;\\
  &s_0(x)+\cdots +s_{m-1}(x)+1,\ldots ,s_0(x)+\cdots +s_m(x))
\end{align*}
In the sequel we shall employ the notation of the previous sentence to describe an $x\in\pscript$.

As we shall see, there may be more than one geodesic from $a$ to $b$ when $a<b$. If there are $j$ geodesics from vertex~1 to vertex $n$, we define the \textit{curvature} $K(n)$ at $n$ to be $K(n)=j-1$. One might argue that the curvature should be a local property and should not depend so heavily on vertex~1 which could be a considerable distance away. However, if there are a lot of geodesics from 1 to $n$, then by
Theorem~\ref{thm32}, there are also a lot of geodesics from other vertices to $n$. Thus, the definition of curvature is not so dependent on the initial vertex~1 as it first appears. Assuming that particles tend to move along geodesics, we see that $K(n)$ gives a measure of the tendency for vertex $n$ to be occupied.

As an example, consider the simple toy universe with shell sequence $(1,2,3,4,5,4,3,2,1)$. The labels of the vertices become
\begin{equation*}
x\!=\!(1;2,3;4,5,6;7,8,9,10;11,\!12,13,\!14,\!15;16,17,18,19;20,21,22;23,24;25)
\end{equation*}
The c-causet $x$ represents a toy universe that expands uniformly and then contracts uniformly. Of course, the paths 1--2 and 1--3
are geodesics so $d(1,2)=1$, $d(1,3)=2$. There are two paths from 1 to 4 given by 1--2--4 and 1--3--4. We conclude that both of these paths are geodesics and $d(1,4)=\sqrt{5}$. There are two paths for 1 to 5 given by 1--2--5 and 1--3--5. The first path has length $\sqrt{10}$ and the second has length $\sqrt{8}$. Hence, 1--3--5 is a geodesic and $d(1,5)=\sqrt{8}$. Similarly, there are two paths from 1 to 6 given by 1--2--6 and 1--3--6. The path 1--3--6 is the only geodesic and $d(1,6)=\sqrt{13}$. For a last example, consider the paths from 1 to 11. There are 24 paths from 1 to 11, but only the paths 1--3--6--8--11, 1--3--5--8--11, 1--3--6--9--11 are geodesics so $d(1,11)=\sqrt{26}$. The following two tables summarize the distances and curvatures for the vertices of the c-causet $x$.
\vskip 2pc

{\hskip - 4pc
\begin{tabular}{c|c|c|c|c|c|c|c|c|c|c|c|c|c}
$i$&$2$&$3$&$4$&$5$&$6$&$7$&$8$&$9$&$10$&$11$&$12$&$13$&$14$\\
\hline
$d(1,i)$&$1$&$2$&$\sqrt{5}$&$\sqrt{8}$&$\sqrt{13}$&$\sqrt{12}$&$\sqrt{17}$
  &$\sqrt{22}$&$\sqrt{29}$&$\sqrt{26}$&$\sqrt{31}$&$\sqrt{38}$&$\sqrt{45}$\\
\end{tabular}}
\vskip 2pc
{\hskip - 4pc
\begin{tabular}{c|c|c|c|c|c|c|c|c|c|c|c}
$i$&$15$&$16$&$17$&$18$&$19$&$20$&$21$&$22$&$23$&$24$&$25$\\
\hline
$d(1,i)$&$\sqrt{54}$&$\sqrt{47}$&$\sqrt{54}$&$\sqrt{61}$&$\sqrt{70}$&$\sqrt{63}$&$\sqrt{70}$
  &$\sqrt{77}$&$\sqrt{72}$&$\sqrt{79}$&$\sqrt{76}$\\
\noalign{\bigskip}
\multicolumn{12}{c}%
{\textbf{Table 1}}\\
\end{tabular}}
\vskip 2pc

{\begin{tabular}{c|c|c|c|c|c|c|c|c|c|c|c|c|c|c}
$i$&$1$&$2$&$3$&$4$&$5$&$6$&$7$&$8$&$9$&$10$&$11$&$12$&$13$&$14$\\
\hline
$K(i)$&$-1$&$0$&$0$&$1$&$0$&$0$&$0$&$1$&$0$&$0$&$2$&$0$&$1$&$0$\\
\end{tabular}}
\vskip 2pc
{\begin{tabular}{c|c|c|c|c|c|c|c|c|c|c|c}
$i$&$15$&$16$&$17$&$18$&$19$&$20$&$21$&$22$&$23$&$24$&$25$\\
\hline
$K(i)$&$0$&$2$&$2$&$0$&$1$&$5$&$3$&$1$&$5$&$9$&$5$\\
\noalign{\bigskip}
\multicolumn{12}{c}%
{\textbf{Table 2}}\\
\end{tabular}}
\vskip 2pc

In this toy universe, vertices 4, 8 and 13 might represent toy planets about a toy star 11. Vertices 16 and 17 might represent a toy double star and 20, 21 might be large toy stars with toy planets 19, 21. Finally, vertices 23,24, 25 might represent a toy black hole. This example indicates that an expansion followed by a contraction results in some relatively large curvatures. 

Theorem~\ref{thm32} can be used to construct an algorithm for determining geodesics recursively. A geodesic from $a$ to $b$ can be formed by taking an appropriate vertex $c$ in the shell immediately prior to the shell containing $b$ and constructing a geodesic from $a$ to $c$ together with the edge $cb$. This algorithm gives a necessary but not a sufficient condition for a geodesic. That is, a geodesic followed by another connecting geodesic need not form a geodesic. For instance, in the previous example, the paths 1--3--6 and 6--7 are geodesics, but 1--3--6--7 is not a geodesic.

We now give an example which indicates that exponential growth (inflation) results in a fairly uniform low curvature. Let $y$ be the c-causet with shell sequence $(1,2,4,8,16)$. The resulting vertex labeling becomes:
\begin{equation*}
y=(1;2,3;4,5,6,7;8,9,10,11,12,13,14,15;16,17,\ldots ,31)
\end{equation*}
The following two tables summarize the distances and curvatures for the vertices of $y$.
\vskip 2pc

{\hskip -2pc
\begin{tabular}{c|c|c|c|c|c|c|c|c|c|c|c}
$i$&$2$&$3$&$4$&$5$&$6$&$7$&$8$&$9$&$10$&$11$&$12$\\
\hline
$d(1,i)$&$1$&$2$&$\sqrt{5}$&$\sqrt{8}$&$\sqrt{13}$&$\sqrt{20}$&$\sqrt{17}$
  &$\sqrt{22}$&$\sqrt{29}$&$6$&$\sqrt{45}$\\
\end{tabular}}
\vskip 2pc
{\hskip -2pc
\begin{tabular}{c|c|c|c|c|c|c|c|c|c|c}
$i$&$13$&$14$&$15$&$16$&$17$&$18$&$19$&$20$&$21$&$22$\\
\hline
$d(1,i)$&$\sqrt{56}$&$\sqrt{69}$&$\sqrt{84}$&$\sqrt{61}$&$\sqrt{70}$&$9$&$\sqrt{92}$&$\sqrt{105}$&$\sqrt{118}$&$\sqrt{133}$\\
\end{tabular}}
\vskip 2pc
{\hskip -2pc
\begin{tabular}{c|c|c|c|c|c|c|c|c|c}
$i$&$23$&$24$&$25$&$26$&$27$&$28$&$29$&$30$&$31$\\
\hline
$d(1,i)$&$\sqrt{148}$&$\sqrt{164}$&$\sqrt{184}$&$\sqrt{205}$&$\sqrt{228}$&$\sqrt{253}$&$\sqrt{280}$&$\sqrt{309}$&$\sqrt{340}$\\
\noalign{\bigskip}
\multicolumn{10}{c}%
{\textbf{Table 3}}\\
\end{tabular}}
\vskip 2pc

{\hskip -2pc
\begin{tabular}{c|c|c|c|c|c|c|c|c|c|c|c|c|c|c|c|c|c}
$i$&$1$&$2$&$3$&$4$&$5$&$6$&$7$&$8$&$9$&$10$&$11$&$12$&$13$&$14$&$15$&$16$&$17$\\
\hline
$K(i)$&$-1$&$0$&$0$&$1$&$0$&$0$&$0$&$1$&$0$&$1$&$0$&$0$&$0$&$0$&$0$&$1$&$1$\\
\end{tabular}}
\vskip 2pc
{\begin{tabular}{c|c|c|c|c|c|c|c|c|c|c|c|c|c|c}
$i$&$18$&$19$&$20$&$21$&$22$&$23$&$24$&$25$&$26$&$27$&$28$&$29$&$30$&$31$\\
\hline
$K(i)$&$1$&$0$&$1$&$0$&$1$&$0$&$0$&$0$&$0$&$0$&$0$&$0$&$0$&$0$\\
\noalign{\bigskip}
\multicolumn{14}{c}%
{\textbf{Table 4}}\\
\end{tabular}}
\vskip 2pc

Our last example indicates that periodic shell sequence behavior results in periodic curvature behavior. Let $z$ be the c-causet with shell sequence $(1,2,3,2,1,2,3,2,1)$ and vertex labeling
\begin{equation*}
z=(1;2,3;4,5,6;7,8;9;10,11;12,13,14;15,16;17)
\end{equation*}
The next tables summarize the distance and curvatures for the vertices of $z$.
\vskip 2pc

{\begin{tabular}{c|c|c|c|c|c|c|c|c}
$i$&$2$&$3$&$4$&$5$&$6$&$7$&$8$&$9$\\
\hline
$d(1,i)$&$1$&$2$&$\sqrt{5}$&$\sqrt{8}$&$\sqrt{13}$&$\sqrt{12}$&$\sqrt{17}$
  &$\sqrt{22}$\\
\end{tabular}}
\vskip 2pc
{\begin{tabular}{c|c|c|c|c|c|c|c|c}
$i$&$10$&$11$&$12$&$13$&$14$&$15$&$16$&$17$\\
\hline
$d(1,i)$&$\sqrt{23}$&$\sqrt{26}$&$\sqrt{27}$&$\sqrt{30}$&$\sqrt{35}$&$\sqrt{34}$&$\sqrt{39}$&$\sqrt{38}$\\
\noalign{\bigskip}
\multicolumn{8}{c}%
{\textbf{Table 5}}\\
\end{tabular}}
\vskip 2pc

{\hskip -2pc
\begin{tabular}{c|c|c|c|c|c|c|c|c|c|c|c|c|c|c|c|c|c}
$i$&$1$&$2$&$3$&$4$&$5$&$6$&$7$&$8$&$9$&$10$&$11$&$12$&$13$&$14$&$15$&$16$&$17$\\
\hline
$K(i)$&$-1$&$0$&$0$&$1$&$0$&$0$&$0$&$1$&$0$&$0$&$0$&$1$&$0$&$0$&$0$&$1$&$0$\\
\noalign{\bigskip}
\multicolumn{16}{c}%
{\textbf{Table 6}}\\
\end{tabular}}
\vskip 2pc

In general relativity theory it is postulated that the mass-energy distribution determines the curvature, while in this microscopic picture we assume that it is the other way around. That is, the curvature determines the mass distribution and the curvature is given by the geometry (shell sequence). We are now confronted with the question: What determines the shell sequence of our particular universe? To study this question, the next section presents the macroscopic picture. This picture describes the evolution of a universe as a quantum sequential growth process. In such a process, the probabilities of competing evolutions are determined by quantum amplitudes. Moreover, we shall see the emergence of a discrete 4-manifold. Contrary to prevailing wisdom, the situation may be picturesquely described by the following counterintuitive statement. The microscopic picture is painted by number theory while the macroscopic picture is painted by quantum mechanics. In the end, quantum mechanics determines everything.

\section{Macroscopic Picture} 
The tree $(\pscript ,\shortrightarrow )$ can be thought of as a growth model and an $x\in\pscript _n$ is a possible universe at step (time) $n$. An instantaneous universe $x\in\pscript _n$ grows one element at a time in one of two different ways. To be specific, if $x\in\pscript _n$ has shell sequence $(s_0(x),s_1(x),\ldots ,s_m(x))$, then $x$ will grow to one of its two offspring $x\to x_0$ or $x\to x_1$ where $x_0$ and $x_1$ have shell sequences
\begin{align*}
&(s_0(x),s_1(x),\ldots ,s_m(x)+1)\\
&(s_0(x),s_1(x),\ldots ,s_m(x),1)
\end{align*}
respectively. We call $x_0$ the $0$-\textit{offspring} and $x_1$ the $1$-\textit{offspring} of $x$. In this way we can recursively order the c-causets in $\pscript$ by using the notation $x_{n,j}$, $n=1,2,\ldots$, $j=0,1,2,\ldots ,2^{n-1}-1$. For example, in terms of their labels we have:
\begin{align*}
x_{1,0}&=(1),\ x_{2,0}=(1,2),\ x_{2,1}=(1;2),\\
x_{3,0}&=(1,2,3),\ x_{3,1}=(1,2;3),\ x_{3,2}=(1;2,3),\ x_{3,3}=(1;2;3),\\
x_{4,0}&=(1,2,3,4),\ x_{4,1}=(1,2,3;4),\ x_{4,2}=(1,2;3,4),\ x_{4,3}=(1,2;3;4),\\
x_{4,4}&=(1;2,3,4),\ x_{4,5}=(1;2,3;4),\ x_{4,6}=(1;2;3,4),\ x_{4,7}=(1;2;3;4).
\end{align*}
In terms of their shell sequences we have:
\begin{align*}
x_{1,0}&=(1),\ x_{2,0}=(2),\ x_{2,1}=(1,1)\\
x_{3,0}&=(3),\ x_{3,1}=(2,1),\ x_{3,2}=(1,2),\ x_{3,3}=(1,1,1),\\
x_{4,0}&=(4),\ x_{4,1}=(3,1),\ x_{4,2}=(2,2),\ x_{4,3}=(2,1,1),\ x_{4,4}=(1,3),\\
x_{4,5}&=(1,2,1),\ x_{4,6}=(1,1,2),\ x_{4,7}=(1,1,1,1)
\end{align*}
In general, the c-causet $x_{n,j}$ has the two offspring $x_{n,j}\to x_{n+1,2j}$ and $x_{n,j}\to x_{n+1,2j+1}$, $n=1,2,\ldots$,
$j=0,1,2,\ldots ,2^{n-1}-1$. For example, $x_{3,2}\to x_{4,4}$ and $x_{3,2}\to x_{4,5}$ while $x_{3,3}\to x_{4,6}$ and $x_{3,3}\to x_{4,7}$. Conversely, for $n=2,3,\ldots$, $x_{n,j}$ has the unique producer $x_{n-1},\floors{j/2}$ where $\floors{j/2}$ is the integer part of $j/2$. For example $x_{5,14}$ has the producer $x_{4,7}$ and $x_{5,13}$ has the producer $x_{4,6}$. With the previous notation $\pscript =\brac{x_{n,j}}$ in place, we call $(\pscript ,\shortrightarrow )$ a \textit{sequential growth process} (SGP).

A c-causet $x_{n,j}$ has a unique history. That is, there exists a unique sequence in $\pscript$ satisfying
\begin{equation*}
x_{1,0}\to x_{2,j_2}\to x_{2,j_3}\to\cdots\to x_{n,j}
\end{equation*}
We say that $x,y\in\pscript$ are \textit{twins} if $x\ne y$ and $x,y$ have the same history. Clearly, $x$ and $y$ are twins if and only if they have the same producer. We can characterize twins as pairs of c-causets of the form $\sqbrac{x_{n,2j},x_{n,2j+1}}$. For example, the twins in the fourth shell are $\sqbrac{x_{4,0},x_{4,1}}$, $\sqbrac{x_{4,2},x_{4,3}}$, $\sqbrac{x_{4,4},x_{4,5}}$, and $\sqbrac{x_{4,6},x_{4,7}}$. Since twins are closely related, we shall find it convenient to identify them. We use the notation $\pscripthat _n$ for the set of twin pairs with $n$ vertices each, $n=2,3,\ldots$ and we let $\pscripthat =\cup\pscripthat _n$ be the set of all twin pairs. Moreover, we use the notation
\begin{equation*}
\xhat _{n,j}=\sqbrac{x_{n,2j},x_{n,2j+1}}, n=2,3,\ldots ,\quad j=0,1,\ldots ,2^{n-2}-1
\end{equation*}
and note that $\ab{\pscript _n}=2^{n-2}$.

We condense the tree $(\pscript ,\shortrightarrow )$ to form the multigraph $(\pscripthat ,\shortrightarrow )$. The term multigraph is used because now there are two edges linking vertices instead of one. We can say that for $\xhat _{n,j}$ we have $\xhat _{n,j}\to\xhat _{n+1,2j}$ and
$\xhat _{n,j}\to\xhat _{n+1,2j+1}$, that is
\begin{align*}
&\sqbrac{x_{n,2j},x_{n,2j+1}}\to\sqbrac{x_{n+1,4j},x_{n+1,4j+1}}\\
\intertext{and}
&\sqbrac{x_{n,2j},x_{n,2j+1}}\to\sqbrac{x_{n+1,4j+2},x_{n+1,4j+3}}
\end{align*}
But now $\xhat _{n,j}\to\xhat _{n+1,2j}$ has two edges $(x_{n,2j},x_{n+1,4j})$ and $(x_{n,2j},x_{n+1,4j+1})$ while
$\xhat _{n,j}\!\to\!\xhat _{n+1,2j+1}$ has two edges $(x_{n,2j+1},x_{n+1,4j+2})$ and $(x_{n,2j+1},x_{n+1,4j+3})$. We thus have four edges
originating at $\xhat _{n,j}$ which we denote by $e_{n,j}^k$ $k=1,2,3,4$ in the order given above.

As an example, consider the twins $\xhat _{3,1}=\sqbrac{x_{3,2},x_{3,3}}$ which produce $\xhat _{4,2}$ and $\xhat _{4,3}$. We can write this as
$\sqbrac{x_{3,2},x_{3,3}}\to\sqbrac{x_{4,4},x_{4,5}}$ and $\sqbrac{x_{3,2},x_{3,3}}\to\sqbrac{x_{4,6},x_{4,7}}$. We then have the following four edges originating at $\xhat _{3,1}$:
\begin{align*}
e_{3,1}^1&=(x_{3,2},x_{4,4}),\quad e_{3,1}^2=(x_{3,2},x_{4,5})\\
e_{3,1}^3&=(x_{3,3},x_{4,5}),\quad e_{3,1}^4=(x_{3,3},x_{4,7})
\end{align*}
We consider $(\pscripthat ,\shortrightarrow )$ as a SGP in which $\shortrightarrow$ is composed of two edges instead of one. We again have that each $\xhat _{n,j}\in\pscripthat$ produces two offspring except that each production can occur in two ways corresponding to the double edges,
$e_{n,j}^1,e_{n,j}^2$ or $e_{n,j}^3,e_{n,j}^4$. Similarly, each $\xhat _{n,j}$ is the offspring of a unique producer but again the production can occur in two ways. We see that $(\pscripthat ,\shortrightarrow )$ resembles a discrete $4$-manifold in which there are four independent ``tangent vectors'' at each $\xhat _{n,j}\in\pscripthat$. The reader may ask why this process is canonical. Why can't we form twins of twins and continue doing this until we finally decide to stop? The reason is that unlike c-causets, twin vertices do not possess unique histories. This is because a history now, not only contains predecessor vertices but also single edges, so an $\xhat _{n,j}\in\pscripthat$ has many histories, in general. We shall consider histories in detail in the next paragraph when we discuss paths.

The twin c-causets $x_{n,2j},x_{n,2j+1}$ are so similar that we identify them and consider $\xhat _{n,j}=\sqbrac{x_{n,2j},x_{n,2j+1}}$ as a single unit. We then view $\xhat _{n,j}\in\pscripthat$ as a possible universe at step $n$, $n=2,3,\ldots\,$. Now two edges are \textit{adjacent} if one enters a twin and the other exits the same twin. Each edge $e _{n,j}^k$ is adjacent to four other edges. To be precise, $e_{n,j}^1$ and $e_{n,j}^2$ are adjacent to $e_{n+1,2j}^k$, $k=1,2,3,4$ while $e_{n,j}^3$ and $e_{n,j}^4$ are adjacent to $e_{n+1,2j+1}^k$, $k=1,2,3,4$. A \textit{path} $\omega$ in
$\pscripthat$ is a sequence of pairwise adjacent edges
\begin{equation*}
\omega =e_{2,0}^{k_2}e_{3,j_3}^{k_3}e_{4,j_4}^{k_4}\cdots
\end{equation*}
An $n$-\textit{path} is a finite sequence of pairwise adjacent edges
\begin{equation*}
\omega =e_{2,0}^{k_2}e_{3,j_3}^{k_3}\cdots e_{n,j_n}^{k_n}
\end{equation*}
We say that an $n$-path $\omega =\omega _2\cdots\omega _n$ has \textit{final vertex} $\xhat\in\pscripthat _{n+1}$ if $\omega _n$ ends at $\xhat$. For example, a $4$-path with final vertex $\xhat _{5,3}$ is $e_{2,0}^2e_{3,0}^3e_{4,1}^4$. Notice that there are $4^{n-1}$ possible $n$-paths. We denote the set of paths in $\pscripthat$ by $\Omegahat$ and the set of $n$-paths by $\Omegahat _n$. We interpret a path as a completed universe (or history) of an evolved universe. A path or $n$-path \textit{contains} $\xhat\in\pscripthat$ if there is an edge of $\omega$ that enters or exits
$\xhat$. We then write $\xhat\in\omega$. In particular $\xhat\in\pscripthat _{n+1}$ is the final vertex of $\omega\in\Omegahat _n$ if
$\xhat\in\omega$.

Let $\escripthat$ be the set of all edges in $\pscripthat$. That is
\begin{equation*}
\escripthat =\brac{e_{n,j}^k\colon n=2,3,\ldots ,\ j=0,1,\ldots ,2^{n-2}-1,\ k=1,2,3,4}
\end{equation*}
A \textit{transition amplitude} is a map $\atilde\colon\escript\to\complex$ satisfying $\sum _{k=1}^4\atilde (e_{n,j}^k)=1$ for all $n,j$. Corresponding to $\atilde$ we define the \textit{amplitude} $a(\omega )$ of an $n$-path $\omega =\omega _2\omega _3\cdots\omega _n\in\Omega _n$ to be a
$a(\omega )=\atilde (\omega _2)\atilde (\omega _3)\cdots\atilde (\omega _n)$. The \textit{amplitude} of a set $A\subseteq\Omegahat _n$ is
\begin{equation*}
a(A)=\sum\brac{a(\omega )\colon\omega\in A}
\end{equation*}
Notice that $a(\Omegahat _n)=1$. The \textit{amplitude} $a(\xhat _{n,j})$ of $\xhat _{n,j}\in\pscripthat _n$ is defined by
\begin{equation*}
a(\xhat _{n,j})=\sum\brac{a(\omega )\colon\omega\in\Omegahat _{n-1},\xhat _{n,j}\in\omega}
\end{equation*}
It follows that $\sum\brac{a(\xhat )\colon\xhat\in\pscripthat _n}=1$. We call $c_{n,j}^k=\atilde (e_{n,j}^k)$ \textit{coupling constants} and note that
$\sum _{k=1}^4c_{n,j}^k=1$ for all $n,j$.

Letting $\ascript _n=2^{\Omegahat _n}$ be the power set on $\Omegahat _n$, we have that $(\Omegahat _n,\ascript _n)$ is a measurable space. The $n$-\textit{decoherence functional} $D_n\colon\ascript _n\times\ascript _n\to\complex$ corresponding to $\atilde$ is defined by
$D_n(A,B)=\overline{a(A)}a(B)$. We interpret $D_n(A,B)$ as a measure of the quantum interference between the events $A$ and $B$. The
$n$-\textit{decoherence matrix} is the $4^{n-1}\times 4^{n-1}$ matrix with components $M_n(\omega ,\omega ')=\overline{a(\omega )}a(\omega ')$,
$\omega ,\omega '\in\Omegahat _n$. Letting $\ket{a(\omega )}$ $\omega\in\Omegahat _n$, be the $4^{n-1}$ dimensional vector with components $a(\omega )$, we see that $M_n$ is the rank~1 positive operator given by $M_n=\ket{a(\omega )}\bra{a(\omega )}$. Again, $D_n(\omega ,\omega ')$ is interpreted as a measure of the interference between the paths $\omega$ and $\omega '$. The $q$-\textit{measure} of
$A\subseteq\Omegahat _n$ corresponding to $\atilde$ is defined by $\mu _n(A)=D_n(A,A)=\ab{a(A)}^2$. In particular,
$\mu _n(\omega )=\ab{a(\omega )}^2$, $\omega\in\Omegahat _n$ and $\mu _n(\xhat _{n,j})=\ab{a(\xhat _{n,j})}^2$.

We interpret $\mu _n(A)$ as the quantum propensity of the event $A\in\ascript _n$. The $q$-measure $\mu _n$ is determined by the coupling constants $c_{n,j}^k$. It is believed that once the coupling constants, and hence the $q$-measures $\mu _n$ are known, then certain $n$-paths and c-causet twins $\xhat _{n,j}$ will have dominate propensities. In this way we will determine dominate geometries for the microscopic picture of our particular universe. The $q$-measure $\mu _n$ is not a measure on the $\sigma$-algebra $\ascript _n$, in general \cite{sor94, sor03, sur11}. This is because the additivity condition $\mu _n(A\cup B)=\mu _n(A)+\mu _n(B)$ whenever $A\cap B=\emptyset$, is not satisfied, in general. The physical reason for this additivity failure is quantum interference. However, $\mu _n$ does satisfy a more general condition called
\textit{grade}-2 \textit{additivity} defined as follows. If $A,B,C\in\ascript _n$ are mutually disjoint, then
\begin{equation*}
\mu _n(A\cup B\cup C)=\mu _n(A\cup B)+\mu _n(A\cup C)+\mu _n(B\cup C)-\mu _n(A)-\mu _n(B)-\mu _n(C)
\end{equation*}
Of course, we do have that $\mu _n(\Omegahat _n)=1$ and $\mu _n(A)\ge 0$ for all $A\in\ascript _n$
\bigskip

\begin{exam}{1}
In this example we compute the $q$-measures of the first few twins in terms of the coupling constants. We introduce the notation
\begin{align*}
&\atilde (\xhat _{n,j}\to\xhat _{n+1,2j})=c_{n,j}^1+c_{n,j}^2\\
&\atilde (\xhat _{n,j}\to\xhat _{n+1,2j+1})=c_{n,j}^3+c_{n,j}^4\\
\end{align*}
We then have
\begin{align*}
\mu _3(\xhat _{3,0})&=\ab{\atilde (\xhat _{2,0}\to\xhat _{3,0})}^2=\ab{c_{2,0}^1+c_{2,0}^2}^2\\
\mu _3(\xhat _{3,1})&=\ab{\atilde (\xhat _{3,0}\to\xhat _{3,1})}^2=\ab{c_{2,0}^3+c_{2,0}^4}^2\\
\mu _4(\xhat _{4,0})&=\ab{\atilde (\xhat _{2,0}\to\xhat _{3,0})}^2\ab{\atilde (\xhat _{3,0}\to\xhat _{4,0})}^2
  =\ab{c_{2,0}^1+c_{2,0}^2}^2\ab{c_{3,0}^1+c_{3,0}^2}^2\\
\mu _4(\xhat _{4,1})&=\ab{\atilde (\xhat _{2,0}\to\xhat _{3,0})}^2\ab{\atilde (\xhat _{3,0}\to\xhat _{4,1})}^2
  =\ab{c_{2,0}^1+c_{2,0}^2}^2\ab{c_{3,0}^3+c_{3,0}^4}^2\\
\mu _4(\xhat _{4,2})&=\ab{\atilde (\xhat _{2,0}\to\xhat _{3,1})}^2\ab{\atilde (\xhat _{3,1}\to\xhat _{4,2})}^2
  =\ab{c_{2,0}^3+c_{2,0}^4}^2\ab{c_{3,1}^1+c_{3,1}^2}^2\\
\mu _4(\xhat _{4,3})&=\ab{\atilde (\xhat _{2,0}\to\xhat _{3,0})}^2\ab{\atilde (\xhat _{3,1}\to\xhat _{4,3})}^2
  =\ab{c_{2,0}^3+c_{2,0}^4}^2\ab{c_{3,1}^3+c_{3,1}^4}^2\\
\end{align*}
\end{exam}

In general, $\mu _n\paren{\brac{\xhat _{n,i},\xhat _{n,j}}}\ne\mu _n(\xhat _{n,i})+\mu _n(\xhat _{n,j})$. For instance,\newline in Example 1
\begin{equation*}
1=\mu _3\paren{\brac{\xhat _{3,0},\xhat _{3,1}}}\ne\ab{c_{2,0}^1+c_{2,0}^2}^2+\ab{c_{2,0}^3+c_{2,0}^4}^2
  =\mu _3(\xhat _{3,0})+\mu _3(\xhat _{3,1})
\end{equation*}
in general. When we have $\mu _n\paren{\brac{\xhat _{n,i},\xhat _{n,j}}}=\mu _n(\xhat _{n,i})+\mu _n(\xhat _{n,j})$, we say that $\xhat _{n,i}$ and
$\xhat _{n,j}$ \textit{do not interfere}. When this equality is replaced by $<$ ($>$) we say that $\xhat _{n,i}$ and $\xhat _{n,j}$
\textit{interfere destructively} (\textit{constructively}). It is useful to notice that if $\xhat _{n,i}\to\xhat _{n+1,j}$ then
\begin{equation*}
\mu _{n+1}(\xhat _{n+1,j})=\mu _n(\xhat _{n,i})\ab{\atilde (x_{n,k}\to x_{n+1,j})}^2
\end{equation*}

\section{An Amplitude Process} 
We call an amplitude $\atilde\colon\escript\to\complex$ an \textit{amplitude process} (AP). Corresponding to an AP $\atilde$ we have the coupling constants $c_{n,j}^k=\atilde (e_{n,j}^k)$ and these determine the $q$-measures $\mu _n$. This section presents a simple but nontrivial example of an AP that may have some physical significance.
We define coupling constants $c_{n,j}^k$, $n=2,3,\ldots$, $j=0,1,\ldots ,2^{n-2}-1$, $k=1,2,3,4$ as follows. Let $\pi /16\le\theta _{n,j}\le\pi /12$ be the angle given by
\begin{equation*}
\theta _{n,j}=\sqbrac{1-\paren{\tfrac{1}{2}-\frac{j}{2^{n-2}}}^2}\frac{\pi}{12}
\end{equation*}
and define $c_{n,j}^1=e^{i\theta n,j}/z_{n,j}$, $c_{n,j}^2=ic_{n,j}^1$, $c_{n,j}^3=\overline{c_{n,j}^1}$, $c_{n,j}^4=\overline{c_{n,j}^2}$ where
\begin{align*}
z_{n,j}&=e^{i\theta n,j}(1+i)+e^{-i\theta n,j}(1-i)=2(\cos\theta _{n,j}-\sin\theta _{n,j})\\
&=2\sqrt{2}\,\sin\paren{\frac{\pi}{4}-\theta _{n,j}}
\end{align*}
It is useful to note that $z_{n,j}^2=4(1-\sin 2\theta _{n,j})$. As we have seen in Example 1, it is also useful to know
\begin{align*}
\ab{c_{n,j}^1+c_{n,j}^2}^2&=\ab{e^{i\theta _{n,j}}+ie^{i\theta _{n,j}}}^2/\ab{z_{n,j}}^2=\frac{\ab{1+i}^2}{4(1-\sin 2\theta _{n,j})}\\
&=\frac{1}{2(1-\sin 2\theta _{n,j})}
\end{align*}
and $\ab{c_{n,j}^3+c_{n,j}^4}^2=\ab{c_{n,j}^1+c_{n,j}^2}^2$.
\bigskip

\begin{exam}{2}
We compute the $q$-measures of the first few twins. For $n=2,3,4$ we have:
$\theta _{2,0}=\pi /16$, $\theta _{3,0}=\pi /16$, $\theta _{3,1}=\pi/12$, $\theta _{4,0}=\pi /16$, $\theta _{4,1}=(15/16)(\pi /12)$,
$\theta _{4,3}=(15/16)(\pi /12)$. It follows from Example 1, that
\begin{equation*}
\mu _3(\xhat _{3,0})=\mu _3(\xhat _{3,1})=\frac{1}{2(1-\sin\pi /8)}=0.80996
\end{equation*}
Moreover,
\begin{align*}
\mu _4(\xhat _{4,0})&=\mu _4(\xhat _{4,1})=\frac{1}{4(1-\sin\pi /8)^2}=0.65603\\
\intertext{and}
\mu _4(\xhat _{4,2})&=\mu _4(\xhat _{4,3})=\frac{1}{2(1-\sin\pi /8)}\cdot\frac{1}{2(1-\sin\pi /6)}=0.80996
\end{align*}
\end{exam}
\bigskip

Notice that $z_{n,j}$ attains a minimum when $\theta _{n,j}=\pi /12$. This minimum occurs when $j=2^{n-3}$, $n=3,4,\ldots\,$. Using the notation
$j_n=2^{n-3}$, $n=3,4,\ldots$, we have $j_3=1, j_4=2, j_6=8,\ldots\,$. It follows that the twins of maximum propensity are: 
$\xhat _{3,0},\xhat _{3,1}; \xhat _{4,2},\xhat _{4,3};\xhat _{5,4},\xhat _{5,5};\xhat _{6,8},\xhat _{6,9};\ldots\,$. That is,
$\xhat _{n,2^{n-3}},\xhat _{n,2^{n-3}+1}$ have the maximum propensity $0.80996$ of being an actual realized universe. We conclude that in this model the twins with highest propensity lie in the ``middle'' of the process $(\pscripthat ,\shortrightarrow )$. Twins close to the ``middle'' also have high propensity and the $q$-measure decreases to close to zero for large $n$ as we move toward the right or left of the multigraph. This is more easily seen examining the $n=5$ case in the next example.
\bigskip

\begin{exam}{3}
The $q$-measures for $n=5$ are now computed.
\begin{align*}
\mu _5(\xhat _{5,0})&=\mu (\xhat _{5,1})=\mu _4(\xhat _{4,0})\ab{\atilde (\xhat _{4,0}\to\xhat _{5,0})}^2=(0.65603)(0.80996)\\
  &=0.53136\\
\mu _5(\xhat _{5,2})&=\mu _5(\xhat _{5,3})=\mu _4(\xhat _{4,1})\ab{\atilde (\xhat _{4,1}\to\xhat _{5,2})}^2=(0.65603)(0.94589)\\
  &=0.62035\\
  \mu _5(\xhat _{5,4})&=\mu _5(\xhat _{5,5})=\mu _4(\xhat _{4,2})\ab{\atilde (\xhat _{4,2}\to\xhat _{5,4})}^2=0.80996\\
\mu _5(\xhat _{5,6})&=\mu (\xhat _{5,7})=\mu _4(\xhat _{4,3})\ab{\atilde (\xhat _{4,3}\to\xhat _{5,6})}^2=(0.80996)(0.94589)\\
  &=0.76613\\
\end{align*}
\end{exam}

We now consider $q$-measures of paths. The $2$-paths are $e_{2,0}^k$, $k=1,2,3,4$, and we have
\begin{equation*}
\mu _2(e_{2,0}^k)=\frac{1}{(z _{2,0})^2}=\frac{1}{4(1-\sin\pi /8)}=0.40498
\end{equation*}
Since
\begin{align*}
\mu _2\paren{\brac{e_{2,0}^1,e_{2,0}^2}}&=\ab{a\paren{\brac{e_{2,0}^1,e_{2,0}^2}}}^2=\ab{\atilde (e_{2,0}^1)+\atilde (e_{2,0}^2)}^2\\
\noalign{\medskip}
   &=\frac{\ab{1+i}^2}{(z_{2,0})^2}=\frac{2}{(z_{2,0})^2}=\mu _2(e_{2,0}^1)+\mu _2(e_{2,0}^2)
\end{align*}
we conclude that $e_{2,0}^1$ and $e_{2,0}^2$ do not interfere. Similarly, $e_{2,0}^3$ and $e_{2,0}^4$ do not interfere. Since
\begin{align*}
\mu _2\paren{\brac{e_{2,0}^1,e_{2,0}^3}}&=\ab{\atilde (e_{2,0}^1)+\atilde (e_{2,0}^3)}^2=\frac{\ab{e^{i\pi /16}+e^{-i\pi /16}}^2}{(z_{2,0})^2}\\
   &=\frac{\cos ^2\pi /16}{1-\sin\pi /8}=1.5583
\end{align*}
and $\mu _2(e_{2,0}^1)+\mu _2(e_{2,0}^3)=0.80996$ we see that $e_{2,0}^1$ and $e_{2,0}^3$ interfere constructively. Moreover,
\begin{align*}
\mu _2\paren{\brac{e_{2,0}^2,e_{2,0}^4}}&=\ab{\atilde (e_{2,0}^2)+\atilde (e_{2,0}^4)}^2=\frac{\ab{e^{i\pi /16}-e^{-i\pi /16}}^2}{(z_{2,0})^2}\\
   &=\frac{\sin ^2\pi /16}{1-\sin\pi /8}=0.061654
\end{align*}
so $e_{3,0}^2$ and $e_{2,0}^4$ interfere destructively. Also,
\begin{align*}
\mu _2\paren{\brac{e_{2,0}^1,e_{2,0}^4}}&=\ab{\atilde (e_{2,0}^1)+\atilde (e_{2,0}^4)}^2=\frac{\ab{e^{i\pi /16}-ie^{-i\pi /16}}^2}{(z_{2,0})^2}\\
\noalign{\medskip}
   &=\frac{2(\cos\pi /16-\sin\pi /16)^2}{(z_{2,0})^2}=\frac{2(1-\sin\pi /8)}{(z_{2,0})^2}=0.5000
\end{align*}
Hence, $e_{2,0}^1$ and $e_{2,0}^4$ interfere destructively. In a similar way $e_{2,0}^2$ and $e_{2,0}^3$ interfere destructively because
$\mu _2\paren{\brac{e_{2,0}^2,e_{2,0}^3}}=0.5000$.

We now consider $q$-measures of longer paths. Let $\omega _n\in\Omegahat _n$ be given by
\begin{equation*}
\omega _n=e_{2,0}^{k_2}e_{3,j_3}^{k_3}\cdots e_{n,j_n}^{k_n},\quad j_i\in\brac{0,1,\ldots ,2^{i-1}-1},\ k_i\in\brac{1,2,3,4}
\end{equation*}
The $q$-measure of $\omega _n$ becomes
\begin{align*}
\mu _n(\omega _n)&=\ab{c_{2,0}^{k_2}}^2\ab{c_{3,j_3}^{k_3}}^2\cdots\ab{c_{n,j_n}^{k_n}}^2
   =\frac{1}{\ab{z_{2,0}}^2}\,\frac{1}{\ab{z_{3,j_3}}^2}\cdots\frac{1}{\ab{z_{n,j_n}}^2}\\
\noalign{\medskip}
   &=\frac{1}{4^{n-1}}\frac{1}{(1-\sin 2\theta _{2,0})}\frac{1}{(1-\sin 2\theta _{3,j_3})}\cdots\frac{1}{(1-\sin 2\theta _{n,j_n})}
\end{align*}
As $n\to\infty$ we see that $\mu _n(\omega _n)\to 0$ for any $n$-path $\omega _n$. However, as with twins, the paths toward the ``middle'' of
$\Omegahat _n$ have higher propensities than the others. These paths with the highest propensity have $\theta _{i,j_i}=\pi /12$,
$i=3,4,\ldots$ in which case $j_n=2^{n-3}$. Hence, they have the form
\begin{equation*}
\omega _n=e_{2,0}^{k_2}e_{3,1}^{k_3}e_{4,2}^{k_4}\cdots e_{n,2^{n-3}}^{k_n},\quad k_2\in\brac{3,4},k_3,k_4,\ldots ,k_n\in\brac{1,2}
\end{equation*}
We conclude that
\begin{equation*}
\mu _n(\omega _n)=\frac{1}{4^{n-1}}(0.40498)2^{n-2}=\frac{0.40498}{2^n}
\end{equation*}
Since there are $4^{n-1}$ $n$-paths in all, this is a dominate propensity. Notice there are $2^{n-1}$ such high propensity $n$-paths.

Let $A_n\subseteq\Omegahat _n$ be the set of $n$-paths with highest propensity and let $A\subseteq\Omegahat$ be the extensions of such
$n$-paths to infinity. We consider $A$ as the set of most likely universes. Now
\begin{align*}
\mu _n(A_n)&=\ab{c_{2,0}^3+c_{2,0}^4}^2\ab{c_{3,1}^1+c_{3,1}^2}^2\cdots\ab{c_{n,2^{n-3}}^1+c_{n,2^{n-3}}^2}^2\\
   &=\frac{1}{2(1-\sin\pi /8)}=0.80996
\end{align*}
In the language of \cite{gud13}, we say $A$ is a \textit{suitable} subset of $\Omegahat$ with $q$-measure $\mu (A)=0.80996$. This result states that in the present model, our particular universe is likely to be in the set $A$ of possible universes, or it may deviate slightly from $A$.

It is also of interest to consider the set $A'_n$ of paths of lower propensity. To compute $\mu _n(A'_n)$ we have that
\begin{align*}
a(A_n)&=(c_{2,0}^3+c_{2,0}^4)(c_{3,1}^1+c_{3,1}^2)\cdots (c_{n,2^{n-3}}^1+c_{n,2^{n-3}}^2)\\
\noalign{\medskip}
   &=\frac{e^{-i\pi /16}(1-i)\sqbrac{e^{i\pi /12}(1+i)}^{n-2}}{(2\sqrt{2}\sin 3\pi /16)(\sqrt{2})^{n-2}}=0.89998e^{i47\pi /48}e^{in\pi /3}
\end{align*}
Hence,
\begin{align*}
\mu (A'_n)&=\ab{a(A'_n)}^2=\ab{1-a(A_n)}^2=1+\ab{a(A_n)}^2-2\itre a(A_n)\\
  &=1.80996-1.79996\cos\paren{\frac{47}{48}+\frac{n}{3}}\pi
\end{align*}
The quantity $\mu (A'_n)$ oscillates and does not have a limit as $n\to\infty$. However, using the methods of \cite{gud13}, one can show that $A'\subseteq\Omegahat$ is suitable with $q$-measure $\mu (A')=1$. This is not surprising because there are predominately more elements in
$A'_n$ than in $A_n$ for large $n$.

\section{Covariant Difference Operators} 
We have seen that in the macroscopic picture the SGP $(\pscripthat ,\shortrightarrow )$ resembles a discrete $4$-manifold. In this section we briefly examine this resemblance by introducing difference operators. As before we denote the coupling constants by $c_{n,j}^k=\atilde (e_{n,j}^k)$, $n=2,3,\ldots$, $j=0,1,\ldots ,2^{n-2}-1$, $k=1,2,3,4$. To avoid trivialities, we assume that $c_{n,j}^1\ne c_{n,j}^2$ and $c_{n,j}^3\ne c_{n,j}^4$. Let $H=L_2(\pscripthat )$ be the Hilbert space of square summable complex-valued functions on $\pscripthat$ with the usual inner product
\begin{equation*}
\elbows{f,g}=\sum _{\xhat\in\pscript}\overline{f(\xhat )}g(\xhat )
\end{equation*}
We define the \textit{covariant difference operators} $\nabla _k$, $k=1,2,3,4$ on $H$ by
\begin{equation*}
\nabla _kf(\xhat _{n,j})=f(\xhat _{n+1,2j+\floors{k/3}})-c_{n+1,2j+\floors{k/3}}^kf(\xhat _{n,j})
\end{equation*}
Globally, the operators $\nabla _k$, $k=1,2,3,4$ are linearly independent. However, locally only three of them are linearly independent as the following result shows.
\begin{thm}       
\label{thm61}
{\rm (i)}\enspace The operators $\nabla _1,\nabla _2,\nabla _3$ are linearly independent at each $\xhat _{n,j}\in\pscripthat$.
{\rm (ii)}\enspace At $\xhat _{n,j}\in\pscripthat$ we have
\begin{equation*}
\nabla _4=\frac{c_{n+1,2j+1}^4-c_{n+1,2j+1}^3}{c_{n+1,2j}^1-c_{n+1,2j}^2}(\nabla _1-\nabla _2)+\nabla _3
\end{equation*}
\end{thm}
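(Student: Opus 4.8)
The plan is to handle both parts by direct pointwise evaluation, reading ``linearly independent at $\xhat_{n,j}$'' as the statement that the three linear functionals $f\mapsto\nabla_kf(\xhat_{n,j})$ on $H$ are linearly independent. The one structural observation that drives everything is that $\nabla_1$ and $\nabla_2$ reference exactly the same forward vertex $\xhat_{n+1,2j}$ and differ only in the coupling constant multiplying $f(\xhat_{n,j})$; likewise $\nabla_3$ and $\nabla_4$ both reference $\xhat_{n+1,2j+1}$. Hence the differences $\nabla_1-\nabla_2$ and $\nabla_3-\nabla_4$ act at $\xhat_{n,j}$ as pure multiplication operators, and that is precisely the redundancy encoded in (ii).

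For part (ii) I would simply expand. Evaluating at $\xhat_{n,j}$, the forward term $f(\xhat_{n+1,2j})$ cancels in $\nabla_1-\nabla_2$, leaving
\begin{equation*}
(\nabla_1-\nabla_2)f(\xhat_{n,j})=(c_{n+1,2j}^2-c_{n+1,2j}^1)f(\xhat_{n,j}).
\end{equation*}
Multiplying by the stated coefficient turns this into $(c_{n+1,2j+1}^3-c_{n+1,2j+1}^4)f(\xhat_{n,j})$, and adding $\nabla_3f(\xhat_{n,j})=f(\xhat_{n+1,2j+1})-c_{n+1,2j+1}^3f(\xhat_{n,j})$ collapses the coefficient of $f(\xhat_{n,j})$ to $-c_{n+1,2j+1}^4$, so the sum is $f(\xhat_{n+1,2j+1})-c_{n+1,2j+1}^4f(\xhat_{n,j})=\nabla_4f(\xhat_{n,j})$, as claimed. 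The only thing to verify is that the denominator $c_{n+1,2j}^1-c_{n+1,2j}^2$ is nonzero, which is exactly the standing assumption $c_{n,j}^1\ne c_{n,j}^2$.

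For part (i) I would suppose $\alpha_1\nabla_1f(\xhat_{n,j})+\alpha_2\nabla_2f(\xhat_{n,j})+\alpha_3\nabla_3f(\xhat_{n,j})=0$ for every $f\in H$. The three operators involve $f$ only at the three distinct vertices $\xhat_{n,j}$, $\xhat_{n+1,2j}$, $\xhat_{n+1,2j+1}$, so I may take $f$ to be the indicator of each in turn, equivalently collect coefficients of these three independent values. This yields: the coefficient of $f(\xhat_{n+1,2j+1})$ gives $\alpha_3=0$; the coefficient of $f(\xhat_{n+1,2j})$ gives $\alpha_1+\alpha_2=0$; and the coefficient of $f(\xhat_{n,j})$ gives $\alpha_1c_{n+1,2j}^1+\alpha_2c_{n+1,2j}^2+\alpha_3c_{n+1,2j+1}^3=0$. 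Substituting $\alpha_3=0$ and $\alpha_2=-\alpha_1$ reduces the last equation to $\alpha_1(c_{n+1,2j}^1-c_{n+1,2j}^2)=0$, whence $\alpha_1=0$ by the same nontriviality assumption, and then $\alpha_2=0$.

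There is essentially no hard step: both parts are finite linear algebra once the interpretation of local (pointwise) independence is fixed. The only points needing care are recognizing that the three vertices $\xhat_{n,j},\xhat_{n+1,2j},\xhat_{n+1,2j+1}$ are genuinely distinct (the first lies one level below the other two, and $2j\ne 2j+1$), so that their $f$-values vary freely; and noting that the single hypothesis $c_{n,j}^1\ne c_{n,j}^2$ underlies both the independence in (i) and the well-definedness of the coefficient in (ii).
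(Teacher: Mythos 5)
Your proposal is correct and follows essentially the same route as the paper: part (i) by testing against indicator functions at the three distinct vertices $\xhat_{n,j}$, $\xhat_{n+1,2j}$, $\xhat_{n+1,2j+1}$ and invoking $c_{n+1,2j}^1\ne c_{n+1,2j}^2$, and part (ii) by direct expansion, where $\nabla_1-\nabla_2$ collapses to multiplication by $c_{n+1,2j}^2-c_{n+1,2j}^1$. The only difference is expository: you isolate the structural observation (shared forward vertices) up front, which the paper leaves implicit in its computation.
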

\begin{proof}
(i)\enspace Suppose $a_1\nabla _1+a_2\nabla _2+a_3\nabla _3=0$ at $\xhat _{n,j}$. We then have for any $f\in H$ that
\begin{align*}
(a_1+a_2)f&(\xhat _{n+1,2j})+a_3f(\xhat _{n+1,2j+1})\\
   &-(a_1c_{n+1,2j}^1+a_2c_{n+1,2j}^2+a_3c_{n+1,2j+1}^3)f(\xhat _{n,j})=0
\end{align*}
Letting $f\in H$ satisfy $f(\xhat _{n+1,2j})=1$, $f(\xhat _{n+1,2j+1})=f(\xhat _{n,j})=0$ we conclude that $a_1+a_2=0$. In a similar way we have that $a_3=0$ and $a_1c_{n+1,2j}^1-a_1c_{n+1,2j}^2=0$. Since $c_{n+1,2j}^1\ne c_{n+1,2j}$, we conclude that $a_1=0$ so $a_2=0$.\newline
(ii)\enspace For any $f\in H$ we have 
\begin{align*}
&\frac{c_{n+1,2j+1}^4\!-\!c_{n+1,2j+1}^3}{c_{n+1,2j}^1\!-\!c_{n+1,2j}^2}\ (\nabla _1\!-\!\nabla _2)f(\xhat _{n,j})+\nabla _3f(\xhat _{n,j})\\
&=\frac{c_{n+1,2j+1}^4\!-\!c_{n+1,2j+1}^3}{c_{n+1,2j}^1\!-\!c_{n+1,2j}^2}(c_{n+1,2j}^2\!-\!c_{n+1,2j}^1)f(\xhat _{n,j})
\!+\!f(\xhat _{n+1,2j+1})\!-\!c_{n+1,2j+1}^3f(\xhat _{n,j})\\
&=f(\xhat _{n+1,2j+1})-c_{n+1,2j+1}^4f(\xhat _{n,j})=\nabla _4f(\xhat _{n,j})\qedhere
\end{align*}
\end{proof}

In a similar way, any three of the operators $\nabla _k$ are locally linearly independent and any one of them can locally be written as a linear combination of the other three.

Let $\omega =e_{2,0}^{k_2}\cdots e_{n,j}^{k_n}\cdots$ be a path. We write $k(\omega ,n)=k_n$ and call $k_n$ the \textit{direction of} $\omega$
\textit{at} $n$. We also define
\begin{equation*}
a_\omega (\xhat _{n,j})=c_{2,0}^{k_2}\cdots c_{n-1}^{k_{n-1}}
\end{equation*}
if $\xhat _{n,j}\in\omega$ and $a_\omega (\xhat _{n,j})=0$, otherwise. The $\omega$-\textit{covariant difference operator} is the operator
$\nabla _\omega$ on $H$ given by
\begin{equation*}
\nabla _\omega f(\xhat _{n,j})=a_\omega (\xhat _{n,j})\nabla _{k(\omega ,n)}f(\xhat _{n,j})
\end{equation*}
For all $f\in H$ we have
\begin{equation*}
\nabla _\omega f(\xhat _{n,j})
  =a_\omega (\xhat _{n,j})f(\xhat _{n+1,2j+\floors{k(\omega ,n)/3}})-a_\omega (\xhat _{n+1,2j+\floors{k(\omega ,n)/3}})f(\xhat _{n,j})
\end{equation*}
Now $a_\omega\in H$ and it follows that $\nabla _\omega a_\omega =0$ which is why $\nabla _k$ and $\nabla _\omega$ are called covariant.

We now extend this formalism to functions of two variables. Let $K=H\otimes H$ which we can identify with $L_2(\pscripthat\times\pscripthat )$. For $k,k'=1,2,3,4$, we define the \textit{covariant bidifference operators} $\nabla _{k,k'}\colon K\to K$ by
\begin{align*}
\nabla _{k,k'}f(\xhat _{n,j},\xhat _{n',j'})&=f(\xhat _{n+1,2j+\floors{k/3}},\xhat _{n'+1,2j'+\floors{k'/3}})\\
  &\quad -\cbar _{n+1,2j+\floors{k/3}}^kc_{n'+1,2j'+\floors{k'/3}}^{k'}f(\xhat _{n,j},\xhat _{n'j'})
\end{align*}
For $\omega ,\omega '\in\Omegahat$, the $\omega\omega '$-\textit{covariant bidifference operator} is the operator\newline
$\nabla _{\omega\omega '}\colon K\to K$ given by
\begin{align*}
\nabla _{\omega\omega '}f&(\xhat _{n,j},\xhat _{x',j'})
  =\abar _\omega (\xhat _{n,j})a_{\omega '}(\xhat _{n',j'})\nabla _{k(\omega ,n),k(\omega ',n')}f(\xhat _{n,j},\xhat _{n',j'})\\
  &=\abar _\omega (\xhat _{n,j})a_{\omega '}(\xhat _{n',j'})f(\xhat _{n+1,2j+\floors{k(\omega ,n)/3}},
  \xhat _{n'+1,2j'+\floors{k(\omega ',n')/3}})\\
  &\quad -\abar _\omega (\xhat _{n+1,2j+\floors{k(\omega ,n)/3}})a_{\omega '}(\xhat _{n'+1,2j'+\floors{k(\omega ',n')/3}})f(\xhat _{n,j},\xhat _{n',j'})
\end{align*}
Again, $\nabla _{\omega ,\omega '}$ is called covariant because we have $\nabla _{\omega ,\omega '}\abar _\omega a_{\omega '}=0$.

The \textit{curvature operator} is defined by $\rscript _{\omega ,\omega '}=\nabla _{\omega ,\omega '}-\nabla _{\omega ',\omega}$. We thus have
\begin{align*}
\rscript _{\omega ,\omega '}f(\xhat _{n,j},\xhat _{n',j'})&=\abar _\omega (\xhat _{n,j})a_{\omega '}(\xhat _{n',j'})
  \nabla _{k(\omega ,n),k(\omega ',n')}f(\xhat _{n,j},\xhat _{n',j'})\\
  &\quad -\abar _{\omega '}(\xhat _{n,j})a_\omega (\xhat _{n',j'})\nabla _{k(\omega ',n'),k(\omega ,n)}f(\xhat _{n,j},\xhat _{n',j'})
\end{align*}
Define the operators $\dscript _{k,k'}\colon K\to K$ by
\begin{align*}
\dscript _{k,k'}f(\xhat _{n,j},\xhat _{n',j'})
  &=\abar _\omega (\xhat _{n,j})a_{\omega '}(\xhat _{n',j'})f(\xhat _{n+1,2j+\floors{k/3}},\xhat _{n'+1,2j'+\floors{k'/3}})\\
  &\quad -\abar _{\omega '}(\xhat _{n,j})a_\omega (\xhat _{n',j'})f(\xhat _{n'+1,2j'+\floors{k'/3}},\xhat _{n+1,2j+\floors{k/3}})
\end{align*}
and the operators $\tscript _{k,k'}\colon K\to K$ by
\begin{align*}
\tscript _{k,k'}f(\xhat _{n,j},\xhat _{n',j'})
  &=\left[\abar _{\omega '}(\xhat _{n+1,2j+\floors{k/3}})a_\omega (\xhat _{n'+1,2j'+\floors{k'/3}})\right.\\
  &\quad \left.-\abar _\omega (\xhat _{n+1,2j+\floors{k/3}})a_{\omega '}(\xhat _{n'+1,2j'+\floors{k'/3}})\right]f(\xhat _{n,j},\xhat _{n',j'})
  \end{align*}
If we define $\dscript _{\omega ,\omega '}$ and $\tscript _{\omega ,\omega '}$ by
\begin{align*}
\dscript _{\omega ,\omega '}f(\xhat _{n,j},\xhat _{n',j'})=\dscript _{k(\omega ,n),k(\omega ',n')}f(\xhat _{n,j},\xhat _{n',j'})\\
\intertext{and}
\tscript _{\omega ,\omega '}f(\xhat _{n,j},\xhat _{n',j'})=\dscript _{k(\omega ,n),k(\omega ',n')}f(\xhat _{n,j},\xhat _{n',j'})\\
\end{align*}
then it is easy to check that
\begin{equation}         
\label{eq61}
\rscript _{\omega ,\omega '}=\dscript _{\omega ,\omega '}+\tscript _{\omega ,\omega '}
\end{equation}
We call \eqref{eq61} the \textit{discrete Einstein Equations}. For a further discussion of these equations, we refer the reader to \cite{gud12}.

\end{document}